\newtheorem{lemma}{Lemma}
\newtheorem{thm}{Theorem}
\begin{document}
\begin{frontmatter}
\title{A nonparametric copula density estimator incorporating information on bivariate marginals}
\runtitle{Nonparametric copula density estimator}

\author{\fnms{Yu-Hsiang} \snm{Cheng}\ead[label=e1]{96354501@nccu.edu.tw}}
\address{\printead{e1}}
\affiliation{Academia Sinica, Taiwan R.O.C.}
 \and
 \author{\fnms{Tzee-Ming} \snm{Huang}\corref{}\ead[label=e2]{tmhuang@nccu.edu.tw}\thanksref{t1}}
\thankstext{t1}{Corresponding author.} 
\address{\printead{e2}}
\affiliation{National Chengchi University, Taiwan R.O.C.}
\runauthor{Cheng and Huang}

\end{frontmatter}
\section{Introduction}
Consider the problem of estimating  copula density when the bivariate marginals are known. Let $c$ be a copula density. Let $W_h$ be  the space of tensor product linear B-splines on $[0,1]^d$ with equally space knots, where $h = (h_1, \ldots, h_d)$ and $h_i$ is the distance between two adjacent knots for the $i$-th dimension. Let $B_1$, $\ldots$, $B_k$ denote the tensor product B-spline basis functions for $W_h$. Then, we consider $c$ is approximated by
\[ 
   \alpha_1 B_1 + \cdots + \alpha_k B_k ,
\]
where $\alpha_1, \ldots, \alpha_k$ are coefficients. First, we state some notations and definitions as follows.
\begin{itemize}
\item $W = L^2([0, 1]^d)$.
\item $S$: $\{(f_1, \ldots, f_d): f_1, \ldots, f_d \in L^2([0, 1]) \mbox{ and } \int_0^1 f_1(x)dx = \cdots = \int_0^1 f_d(x)dx\}.$
\item  $A: W \rightarrow S$ is a linear operator such that for $f$  in $W$, $Af = (f_1, \ldots, f_d)$, where \[f_i(x_i)=\int_0^1 \cdots \int_0^1 f(x_1, \ldots, x_d)dx_1 \cdots dx_{i-1}dx_{i+1} \cdots dx_d.\]
\item $H_{ij}: L^2([0, 1]^d) \rightarrow L^2([0, 1]^2)$ to be the linear mapping such that for $f \in W$ and  for $1 \leq i, j \leq d$,
$(H_{ij}f)(x_i, x_j)$  is given by 
\[\int_0^1 \cdots \int_0^1 f(x_1, \ldots, x_d)dx_1 \cdots dx_{i-1}dx_{i+1} \cdots dx_{j-1}dx_{j+1}\cdots dx_d.\]
\item $M = \Big(\int_{[0, 1]^d} c(u)B_1(u)du, \ldots, \int_{[0, 1]^d} c(u)B_k(u)du \Big)^T$
\end{itemize}   
\section{Methodology and main results}
Suppose that we observe data  $X_1, \ldots, X_n$ with copula density $c$, where $X_i = (X_{i,1}, \ldots, X_{i,d})$. For $i = 1, \ldots, k$, let 
\[
 \hat{M}_i = \frac{1}{n}\sum_{j = 1}^n B_i \big(\hat{F}_1(X_{j,1}), \ldots, \hat{F}_d(X_{j,d})\big)
\] 
be a moment estimator for $M_i$, where $ \hat{F}_j(x) = \frac{1}{n} \sum_{i = 1}^n I(X_{i,j} \leq x)$ is the empirical CDF of data $X_{1,j}, \ldots, X_{n,j}$. Let
\[
  P = \left(
\begin{array}{ccc}
\int_{[0, 1]^d} B_1(u)B_1(u)du & \cdots & \int_{[0, 1]^d} B_1(u)B_k(u)du \\
\vdots & \ddots & \vdots \\
\int_{[0, 1]^d} B_k(u)B_1(u)du & \cdots & \int_{[0, 1]^d} B_k(u)B_k(u)du 
\end{array} \right)\]
and 
\[
 Pen(c, \alpha) = \sum_{1 \leq i, j \leq d} \int_0^1 \int_0^1 \Big|(H_{ij}c)(u_i, u_j)-(H_{ij}\sum_{t = 1}^k \alpha_t B_t)(u_i, u_j)\Big|^2 du_i du_j, \]
where $\alpha = (\alpha_1, \ldots, \alpha_k)^T$ is the vector of B-spline coefficients.
We estimate the copula density $c$ using $\hat{c} = \hat{\alpha}_1 B_1 + \cdots + \hat{\alpha}_k B_k$, where $\hat{\alpha} = (\hat{\alpha}_1, \ldots, \hat{\alpha}_k)$ is the minimizer of
\[
 \beta(P \alpha - \hat{M})^T(P \alpha - \hat{M}) + \lambda Pen(c, \alpha)
\]
under the constraints that $\alpha^T B$ is nonnegative and the marginals of $\alpha^T B$ are uniform density on $[0,1]$, where $B= (B_1, \ldots, B_k)^T$ and $\beta =1/(\prod_{i=1}^d h_i)$. Now, we show that $\hat{c}$ is consistent for $c$ under some mild conditions. 
Theorem 1 gives the approximation error of B-splines under a linear constraint and a nonnegativity consraint.
\begin{thm} \label{thm1}
Suppose that $\eta \in S$ and $\{f \in W_h : Af = \eta\} \neq \emptyset$. Let $V$ denote the set $\{f \in W: f \geq 0 \}$ and $V^{(0)}$ be the interior of $V$. Suppose that $g \in \{ f \in V^{(0)}: Af = \eta \}$ and $\varepsilon$ is a positive number such that $B(g, \varepsilon) \stackrel{\mbox{def}}{=} \{ f \in W: \| f - g \| < \varepsilon \} \subset V^{(0)}$ and 
\[
 2d \|\bar{g}_w - g\| < \frac{\varepsilon}{2}
\]
for some $\bar{g}_w \in W_h$. Then for $f \in V \cap \{f \in W: Af = \eta\}$ and $\bar{f}_w \in W_h$, there exists $f_w \in V \cap\{f \in W_h: Af = \eta\}$ such that
\[\| f_w-f\| \leq 2d \Big(1+\frac{2}{\varepsilon}(\|f\|+\|g\|+\varepsilon)\Big) \|\bar{f}_w-f\|. \]
\end{thm}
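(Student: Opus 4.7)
The plan is to construct $f_w$ in two stages. In Stage~1 I project both $\bar f_w$ and $\bar g_w$ onto the affine set $\{h\in W_h:Ah=\eta\}$ by a tensor-product marginal-correcting procedure, producing $\tilde f_w$ and $\tilde g_w$; the hypothesis $2d\|\bar g_w-g\|<\varepsilon/2$ will force $\tilde g_w$ strictly inside $V$. In Stage~2 I form the convex combination $f_w=(1-t)\tilde f_w+t\tilde g_w$ with $t\in[0,1]$ chosen to pull $f_w$ into $V$ while keeping it close to $\tilde f_w$; linearity of $A$ then gives $Af_w=\eta$ for free.

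For Stage~1, given $\bar h_w\in W_h$ approximating some $h$ with $Ah=\eta$, I plan to add to $\bar h_w$ a correction of the form
\[
 \eta_0\,\mathbf 1 \;+\; \sum_{l=1}^{d} \delta_l(x_l) \prod_{k\neq l} \mathbf 1_k(x_k),
\]
where $\mathbf 1_k(x_k)=\sum_i b_i(x_k)\equiv 1$ is the partition of unity of the 1D basis in coordinate $k$, $\eta_0$ is chosen to equate the total masses of $\bar h_w$ and $h$, and each $\delta_l$ is a mean-zero 1D function chosen so that the $l$-th marginal of the corrected spline equals $\eta_l$. The mean-zero property $\int_0^1\delta_l\,dx_l=0$ makes each correction contribute to only one marginal, so the $\delta_l$'s decouple across dimensions. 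A Cauchy--Schwarz estimate yields $\|\delta_l\|_{L^2[0,1]}\le\|\bar h_w-h\|$ and $|\eta_0|\le\|\bar h_w-h\|$, so $\|\tilde h_w-\bar h_w\|\le 2d\|\bar h_w-h\|$. Applying this with $(\bar h_w,h)=(\bar f_w,f)$ produces $\tilde f_w$ with $A\tilde f_w=\eta$ and $\|\tilde f_w-f\|\le(2d+1)\|\bar f_w-f\|$; applying it with $(\bar h_w,h)=(\bar g_w,g)$ produces $\tilde g_w$ with $A\tilde g_w=\eta$ and $\|\tilde g_w-g\|\le 2d\|\bar g_w-g\|<\varepsilon/2$, placing $\tilde g_w\in B(g,\varepsilon/2)\subset V^{(0)}\subset V$.

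For Stage~2, the sufficient condition $\|f_w-g\|<\varepsilon$ places $f_w$ in $B(g,\varepsilon)\subset V^{(0)}\subset V$. Writing $f_w-g=(1-t)(\tilde f_w-\tilde g_w)+(\tilde g_w-g)$ and using $\|\tilde g_w-g\|<\varepsilon/2$, this reduces to $(1-t)\|\tilde f_w-\tilde g_w\|<\varepsilon/2$, which I meet by taking $t$ to be the smallest nonnegative value satisfying the bound. Substituting the resulting $t$ into
\[
 \|f_w-f\|\;\le\;\|\tilde f_w-f\|+t\,\|\tilde g_w-\tilde f_w\|
\]
and bounding $\|\tilde g_w-\tilde f_w\|$ by the triangle inequality via $\|\tilde g_w-g\|+\|g\|+\|f\|+\|\tilde f_w-f\|$ should, after algebra, reproduce the advertised constant $2d\bigl(1+\tfrac{2}{\varepsilon}(\|f\|+\|g\|+\varepsilon)\bigr)$.

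The hardest part will be the Stage~2 bookkeeping in the regime where $\|\tilde f_w-\tilde g_w\|$ substantially exceeds $\varepsilon/2$, which occurs when $f$ is far from $g$. In that regime $t$ is close to $1$, so $t\|\tilde g_w-\tilde f_w\|=\|\tilde g_w-\tilde f_w\|-\varepsilon/2$, and one must argue carefully that this nearly constant quantity can still be absorbed into a multiplicative bound of the form $C\|\bar f_w-f\|$. The matching of the Stage~1 constant $2d$ against the hypothesis $2d\|\bar g_w-g\|<\varepsilon/2$ is what produces the $\tfrac{2}{\varepsilon}$ factor in the final constant: the positivity margin $\varepsilon/2$ enjoyed by $\tilde g_w$ is exactly what the triangle estimate needs to convert the $\|f\|+\|g\|+\varepsilon$ contribution to $\|\tilde g_w-\tilde f_w\|$ into a multiplicative factor of $\|\bar f_w-f\|$, and the tensor-product decoupling of the marginal corrections is what keeps the Stage~1 constant at $2d$ rather than something larger.
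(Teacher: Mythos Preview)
Your Stage~1 is essentially the paper's Lemma~1, and the marginal-correction construction is right (up to the minor bookkeeping slip of writing $2d+1$ for $f$ but $2d$ for $g$). The real problem is Stage~2.

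You propose to guarantee $f_w\in V$ via the sufficient condition $\|f_w-g\|<\varepsilon$, i.e.\ by dragging $f_w$ into the ball $B(g,\varepsilon)$. This is far too restrictive when $f$ is not itself near $g$: if $\|f-g\|>2\varepsilon$, then every point of $B(g,\varepsilon)$ is at distance $>\varepsilon$ from $f$, no matter how small $\|\bar f_w-f\|$ is. Concretely, your choice of $t$ gives $t\|\tilde g_w-\tilde f_w\|=\|\tilde g_w-\tilde f_w\|-\varepsilon/2$, and the triangle bound on $\|\tilde g_w-\tilde f_w\|$ leaves an additive $\|f-g\|$ (or $\|f\|+\|g\|$) term that is \emph{not} multiplied by $\|\tilde f_w-f\|$. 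The $\varepsilon/2$ you subtract cancels only the $\|\tilde g_w-g\|$ contribution; it does nothing to $\|f\|+\|g\|$. So the ``hardest part'' you flag is in fact impossible along this route: the nearly-constant quantity cannot be absorbed multiplicatively into $\|\bar f_w-f\|$.

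The missing idea is to use that $f$ itself lies in $V$ and that $V$ is convex. The paper (following Wong) rewrites the convex combination as
\[
  f_\tau \;=\; \tau\,\tilde f_w+(1-\tau)\,\tilde g_w
        \;=\; \tau\, f \;+\; (1-\tau)\Bigl[\tfrac{\tau}{1-\tau}(\tilde f_w-f)+\tilde g_w\Bigr],
\]
so $f_\tau\in V$ whenever the bracket lies in $B(\tilde g_w,\varepsilon/2)\subset V$; this holds for all $\tau\le(\varepsilon/2)\big/\bigl((\varepsilon/2)+\|\tilde f_w-f\|\bigr)$. Hence $1-\tau^*\le 2\|\tilde f_w-f\|/\varepsilon$, and now the weight placed on $\tilde g_w$ is \emph{already proportional to the approximation error}. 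This gives $(1-\tau^*)\|\tilde g_w-f\|\le \tfrac{2}{\varepsilon}\|\tilde f_w-f\|\,(\|f\|+\|g\|+\varepsilon)$, from which the stated constant follows. Your Stage~2 needs this decomposition (or an equivalent use of $f\in V$ and convexity); without it the bound cannot be made multiplicative in $\|\bar f_w-f\|$.
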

Using Theorem 1, we can establish the consistency of $\hat{c}$, and the result is given in Theorem 2. 
\begin{thm} \label{thm2}
Let $k = \prod_{i=1}^d \big((1/h_i)+1\big)$ denote the number of tensor basis functions. Suppose that $\lim \limits_{n \to \infty} \max(h_1, \ldots, h_d) = 0$ and $\lim \limits_{n \to \infty}(1/h_{\min})^{2+d}k/n = 0$, where $h_{\min} = \min(h_1, \ldots, h_d)$ . Then, $\|\hat{c}-c\| \rightarrow 0$ in probability.
\end{thm}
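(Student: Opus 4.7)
The plan is to (i) use Theorem~\ref{thm1} to construct a feasible B-spline approximation $c_h^\dagger$ of $c$ in $W_h$, (ii) exploit the optimality of $\hat\alpha$ to control $\beta\|P\hat\alpha - M\|^2$ in probability, and (iii) convert this coefficient-space residual into an $L^2$ bound on $\hat c - c$ via the conditioning of the Gram matrix $P$.

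\textbf{Feasible spline approximation.} I would apply Theorem~\ref{thm1} with $\eta = (1,\ldots,1)$ and $g \equiv 1$, the uniform density on $[0,1]^d$. Since the tensor product linear B-splines form a partition of unity, $g$ lies in $W_h$ exactly, so one may take $\bar g_w = g$ and the hypothesis $2d\|\bar g_w - g\| < \varepsilon/2$ is trivially satisfied. Taking $f = c$ and $\bar f_w$ equal to the $L^2$ projection $\bar c_w$ of $c$ onto $W_h$, Theorem~\ref{thm1} produces $c_h^\dagger = (\alpha_h^\dagger)^T B \in W_h$ with $c_h^\dagger \ge 0$, $A c_h^\dagger = \eta$, and $\|c_h^\dagger - c\| \le C\|\bar c_w - c\|$. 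Standard spline approximation theory gives $\|\bar c_w - c\| \to 0$ as $\max_i h_i \to 0$, so $\alpha_h^\dagger$ is feasible for the minimization defining $\hat\alpha$, with $\|c_h^\dagger - c\| \to 0$.

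\textbf{Moment error and objective at $\alpha_h^\dagger$.} I next would show $\beta\|\hat M - M\|^2 = o_p(1)$. Writing $U_j = (F_1(X_{j,1}),\ldots,F_d(X_{j,d}))$, split $\hat M_i - M_i$ into a sampling error $\tfrac1n\sum_j B_i(U_j) - M_i$ and a plug-in perturbation $\tfrac1n\sum_j [B_i(\hat F(X_j)) - B_i(U_j)]$. The sampling term has variance $O(\int B_i^2/n) = O(\prod h_\ell/n)$, so summing over $k$ basis functions and multiplying by $\beta = 1/\prod h_\ell$ yields a contribution of order $k/n$. The plug-in term is bounded pointwise by $\|\nabla B_i\|_\infty \|\hat F - F\|_\infty \mathbf{1}\{X_j \in \mathrm{supp}(B_i)\}$; using $\|\nabla B_i\|_\infty = O(1/h_{\min})$, Dvoretzky--Kiefer--Wolfowitz ($\|\hat F_\ell - F_\ell\|_\infty = O_p(n^{-1/2})$), an $O_p(\prod h_\ell)$ empirical mass on each support, and the fact that any point lies in at most $2^d$ basis supports, its contribution to $\beta\|\hat M - M\|^2$ is $O_p(1/(n h_{\min}^2))$. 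Both terms are $o_p(1)$ under the assumption $(1/h_{\min})^{2+d} k/n \to 0$. The same scaling $\beta P = O(1)$ gives $\beta\|P\alpha_h^\dagger - M\|^2 \le C\|c_h^\dagger - c\|^2$, and each $H_{ij}$ is an $L^2$ contraction, so $Pen(c,\alpha_h^\dagger) \le d^2 \|c_h^\dagger - c\|^2 \to 0$.

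\textbf{Conclusion and main difficulty.} Optimality of $\hat\alpha$ yields $\beta\|P\hat\alpha - \hat M\|^2 + \lambda\, Pen(c,\hat\alpha) \le \beta\|P\alpha_h^\dagger - \hat M\|^2 + \lambda\, Pen(c,\alpha_h^\dagger) = o_p(1)$, and one more triangle inequality against $\hat M - M$ gives $\beta\|P\hat\alpha - M\|^2 = o_p(1)$. Since the $L^2$ projection $P_W c$ has coefficient vector $P^{-1}M$, $\|\hat c - P_Wc\|^2 = (P\hat\alpha - M)^T P^{-1}(P\hat\alpha - M) \le \|P^{-1}\|_{\mathrm{op}} \|P\hat\alpha - M\|^2$; for tensor-product linear B-spline Gram matrices $\|P^{-1}\|_{\mathrm{op}} = O(\beta)$, so $\|\hat c - P_Wc\|^2 = O(\beta)\cdot o_p(1/\beta) = o_p(1)$, which combined with $\|P_Wc - c\|\to 0$ gives $\|\hat c - c\| \to 0$ in probability. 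The main obstacle I expect is the empirical-CDF step: the plug-in of $\hat F$ costs a factor $O(1/h_{\min})$, and its interaction with the Gram scale $\beta$ and the dimension count $k$ is exactly what forces the rate condition $(1/h_{\min})^{2+d} k/n \to 0$ in the hypothesis. The remaining pieces are relatively routine: Theorem~\ref{thm1} does all the heavy lifting for handling the simultaneous nonnegativity and uniform-marginal constraints, and the well-conditioned spectrum of $P$ converts coefficient-space residuals into $L^2$ errors.
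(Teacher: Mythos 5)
Your proposal is correct and follows the same overall skeleton as the paper---a feasible spline approximant of $c$, the optimality of $\hat{\alpha}$, control of $\beta\|\hat{M}-M\|^2$, and the well-conditioning of the B-spline Gram matrix---but two steps are executed by a genuinely different route. The paper introduces the cross term $I_0 = 2\beta(\hat{M}-M)^T P\alpha^* - 2\beta(\hat{M}-M)^T P\hat{\alpha}$, derives the quadratic inequality $U^2 \leq \varepsilon_1 + 4\varepsilon_2 U$ for $U=\sqrt{\beta\,(P(\hat{\alpha}-\alpha^*))^T(P(\hat{\alpha}-\alpha^*))}$, and then converts to $\|\hat{c}-c\|$ through the lower bound on $\min(eigen(P))$ relative to the feasible $\alpha^*$; you instead bound the objective at $\hat{\alpha}$ by its value at the feasible point (which is $o_p(1)$), apply two triangle inequalities against $\hat{M}-M$ to get $\beta\|P\hat{\alpha}-M\|^2=o_p(1)$, and compare $\hat{c}$ with the $L^2$ projection of $c$ onto $W_h$ via $\|P^{-1}\|_{\mathrm{op}}=O(\beta)$. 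Your version is simpler and entirely adequate for consistency (the paper's cross-term and quadratic-inequality machinery is what one would retain if rates were wanted), and you are more explicit than the paper about obtaining the feasible coefficient vector from Theorem~\ref{thm1}: the paper's proof merely supposes the existence of $\alpha^*$ with error $\Delta_1$. Note, however, that verifying the hypothesis $B(g,\varepsilon)\subset V^{(0)}$ for $g\equiv 1$ is delicate, since the nonnegativity cone has empty interior in $L^2$; this difficulty is inherited from the paper's Theorem~\ref{thm1} and is glossed over by both you and the paper. Two small caveats in your moment-error step: the claimed sampling variance $O(\prod_\ell h_\ell/n)$ and the $O_p(\prod_\ell h_\ell)$ empirical mass per basis support implicitly assume a bounded copula density; this is harmless here because the cruder bounds of the paper's Lemma~2 (per-coordinate error $O(1/n)\big(1+(d/h_{\min})^2\big)$, requiring no boundedness) already give $\beta\|\hat{M}-M\|^2=o_p(1)$ under the stated condition $(1/h_{\min})^{2+d}k/n \rightarrow 0$.
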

\section{Proofs}
We will provide the proofs of Theorems 1 and 2 in this section.
\subsection{Proof of Theorem 1}
The  proof of Theorem 1 is based on Lemma 1, which is stated and proved below.
\begin{lemma} \label{lmm1}
Suppose that $\eta \in S$ and $\{f \in W_h: Af = \eta\} \neq \emptyset$. Then for $f \in \{f \in W: Af = \eta\}$ and $\bar{f}_w \in W_h$, there exists $f_w \in \{f \in W_h: Af = \eta\}$ such that
\[\| f_w-f\| \leq 2d\|\bar{f}_w-f\|,\]
where $\| \cdot\|$ denotes the $L^2$ norm.
\end{lemma}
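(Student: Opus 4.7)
The plan is to take the arbitrary B-spline approximant $\bar{f}_w$, which need not satisfy the marginal constraint $A\bar{f}_w=\eta$, and additively correct it to land in $\{g\in W_h : Ag=\eta\}$ while paying a price of at most $2d\|\bar{f}_w-f\|$. I would introduce the one-dimensional defects $\delta_i := \eta_i - (A\bar{f}_w)_i$ for $i=1,\dots,d$. Because $\eta\in S$ and $\int_0^1(A\bar{f}_w)_i(x_i)\,dx_i=\int_{[0,1]^d}\bar{f}_w$ is independent of $i$, the common integral $\delta:=\int_0^1 \delta_i(x_i)\,dx_i$ does not depend on $i$. I would then set
\[
  f_w := \bar{f}_w + \sum_{i=1}^d \delta_i(x_i) - (d-1)\delta.
\]

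Next I would verify that $f_w$ lies in $W_h\cap\{g:Ag=\eta\}$. The hypothesis that $\{f\in W_h:Af=\eta\}$ is nonempty guarantees that each $\eta_i$ is a univariate linear spline on the $i$-th knot grid, so each $\delta_i$ is as well; by the partition-of-unity property of the linear B-spline basis the univariate function $\delta_i(x_i)$ and the constant $(d-1)\delta$ can be expressed as elements of $W_h$, hence $f_w\in W_h$. A direct marginal computation gives, for each $j$,
\[
  (Af_w)_j = (A\bar{f}_w)_j + \delta_j + (d-1)\delta - (d-1)\delta = \eta_j,
\]
since each $\delta_i$ with $i\ne j$ contributes $\delta$ to the $j$-th marginal and the correcting constant cancels these $d-1$ contributions exactly.

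Finally I would estimate $\|f_w-f\|$ by the triangle inequality,
\[
  \|f_w-f\| \le \|\bar{f}_w-f\| + \sum_{i=1}^d \|\delta_i\|_{L^2([0,1])} + (d-1)|\delta|.
\]
Since $\delta_i=(A(f-\bar{f}_w))_i$, Cauchy--Schwarz applied to the marginalization integral yields $\|\delta_i\|_{L^2([0,1])} \le \|f-\bar{f}_w\|$ and $|\delta|\le\|f-\bar{f}_w\|$, and the three contributions sum to $(1+d+(d-1))\|\bar{f}_w-f\| = 2d\,\|\bar{f}_w-f\|$, as required. The main point to watch is the tightness of the constant $2d$: it rests entirely on the specific form of the correction above, since a naive fix of one direction at a time would perturb the marginals in the remaining directions, which is exactly why the cancelling constant $-(d-1)\delta$ has to be included.
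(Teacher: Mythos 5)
Your proof is correct and is essentially the same argument as the paper's: your correction $f_w=\bar f_w+\sum_i\delta_i-(d-1)\delta$ is exactly the function the paper constructs (it first handles $\eta=0$ by subtracting $g=\sum_i (A\bar f_w)_i$ and adding back the constant $(d-1)\mu$, then reduces general $\eta$ by shifting with some $\tilde f\in W_h$ satisfying $A\tilde f=\eta$, which unwinds to your formula), and the constant accounting $1+d+(d-1)=2d$ via Cauchy--Schwarz on the marginalization integrals is identical. The only cosmetic difference is that you treat general $\eta$ directly, using the nonemptiness hypothesis to argue $\eta_i$ is a univariate spline, whereas the paper uses that hypothesis through the auxiliary element $\tilde f$.
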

\begin{proof}
First, we will prove Lemma \ref{lmm1} when $\eta=(0, \ldots, 0)$. For any  $\bar{f}_w \in W_h$, let $(f_1, \ldots, f_d) = A\bar{f}_w$ and $\mu = \int_0^1 f_1(x)dx$. Let $g(x_1, \ldots, x_d) = \sum_{i=1}^d f_i(x_i)$ and $f^* = \bar{f}_w - g$, then we have
\[ Af^* = A\bar{f}_w-Ag = -\mu (d-1)(1, \ldots, 1).\]
Let $e_1$ denote the constant function $1$ on $[0, 1]^d$, then $e_1 \in W_h$. Take $f_w = f^*+\mu (d-1)e_1$, then $Af_w = (0, \ldots, 0)$ and 
\begin{eqnarray*}
\| f_w-f\| & \leq & \| f_w-\bar{f}_w \| + \| \bar{f}_w-f \| \\
           & \leq & \| \mu (d-1) e_1 - g \| + \| \bar{f}_w-f \| \\
           & \leq & |\mu| (d-1) \| e_1  \|+ d \| \bar{f}_w-f \| + \| \bar{f}_w-f \| \\
           & \leq & 2d\| \bar{f}_w-f \| .
\end{eqnarray*}
Here we have used the fact that $\mu^2 \leq \|f_i\|^2 \leq \|\bar{f}_w-f\|^2$ for $i=1$, $\ldots$, $d$. 

Next, we will prove  Lemma \ref{lmm1}  for a general $\eta$. From the assumption that $\{ f \in W_h: Af=\eta \} \neq \emptyset$, there exists a function $\tilde{f}$  in $W_h$ such that $A\tilde{f} = \eta$. Suppose that  $f \in \{ f \in W: Af=\eta \}$ and $\bar{f}_w \in W_h$. Then $A(f-\tilde{f}) = (0, \ldots, 0)$. Apply Lemma \ref{lmm1} with $\eta$, $f$ and $\bar{f}_w$  replaced by $(0, \ldots, 0)$, $f-\tilde{f}$ and $\bar{f}_w-\tilde{f}$ respectively, then there exists $g_w \in W_h$ such that $Ag_w = (0, \ldots, 0)$ and
\[  
 \|g_w-(f-\tilde{f})\| \leq 2d\|(\bar{f}_w-\tilde{f})-(f-\tilde{f}) \|.
\]
Take $f_w = g_w+\tilde{f}$, then $f_w \in W_h$, $Af_w = \eta$ and the above equation becomes
\[
  \| f_w - f \| \leq 2d\| \bar{f}_w- f \|.
\]
The proof of Lemma \ref{lmm1} is complete.
\end{proof}
Now, we will prove Theorem 1.
\begin{proof}
The proof for Theorem \ref{thm1} is adapted from the proof for Lemma 2.4 in Wong \cite{wong1984}. 
Suppose that the assumptions in Theorem \ref{thm1} hold, $f \in V \cap \{f \in W: Af = \eta\}$ and $\bar{f}_w \in W_h$.  Then by Lemma \ref{lmm1}, there exist $\tilde{g}_w$, $\tilde{f}_w \in \{f \in W_h: Af = \eta\}$ such that
\[
  \| \tilde{g}_w - g\| \leq 2d\|\bar{g}_w - g\| < \frac{\varepsilon}{2} 
\]
and 
 \[
 \|  \tilde{f}_w - f\| \leq 2d\|\ \bar{f}_w - f|.
 \]
Note that $\tilde{g}_w  \in \{f \in W_h: Af = \eta\} \cap V^{(0)}$. Let $f_\tau = \tau \tilde{f}_w + (1 -  \tau)\tilde{g}_w$ and $ \tau^* = \sup \{ \tau \in [0, 1]: f_ \tau \in V \}$, then we will show that Theorem \ref{thm1} holds with $f_w= f_{ \tau^*}$. 

Take $\varepsilon_1 = \varepsilon/2$, then $B(\tilde{g}_w, \varepsilon_1) \subset B(g, \varepsilon) \subset V^{(0)}$.  For
\[ 0 \leq \tau \leq \frac{\varepsilon_1}{\varepsilon_1 + \| \tilde{f}_w - f\|}, \] we have 
$\|\tau (\tilde{f}_w - f)/(1 - \tau)\| \leq \varepsilon_1$, so
\[
 \frac{\tau}{1 - \tau}(\tilde{f}_w - f) + \tilde{g}_w \in B(\tilde{g}_w , \varepsilon_1)  \subset V,
\]
which gives
\begin{eqnarray*}
    f_\tau &=&    \tau \tilde{f}_w + (1 - \tau)\tilde{g}_w \\
                  &=&    \tau f + (1 - \tau) \Big[\frac{\tau}{1 - \tau}(\tilde{f}_w - f) + \tilde{g}_w \Big] \in V.
\end{eqnarray*}
Therefore, we have 
\[
  1 - \tau^* \leq 1- \frac{\varepsilon_1}{\varepsilon_1 + \|\tilde{f}_w - f\|} =\frac{\|\tilde{f}_w - f\|}{\varepsilon_1 + \| \tilde{f}_w- f\|} \leq 
 \frac{\| \tilde{f}_w - f\|}{\varepsilon_1}
\]
and 
\begin{eqnarray*}
\|f_w - f\| & = & \|\tau^* \tilde{f}_w + (1 - \tau^*) \tilde{g}_w - f\| \\
& \leq & \tau^* \| \tilde{f}_w - f\| + (1 - \tau^*) \|\tilde{g}_w- f\|\\
& \leq & \|  \tilde{f}_w - f\| \Big( 1+\frac{2}{\varepsilon} (\|f\| + \|g\| + \varepsilon) \Big)\\
& \leq & 2d \Big(1+\frac{2}{\varepsilon}(\|f\|+\|g\|+\varepsilon)\Big) \|\bar{f}_w-f\|.\\
\end{eqnarray*}
\end{proof}

\subsection{Proof of Theorem 2}
Before we provide the proof of Theorem 2, the Lemma2 and its proof are stated as follows.
\begin{lemma}
For $t = 1, \ldots, k$, $ \displaystyle \hat{M}_t = \frac{1}{n}\sum_{i = 1}^n B_t \big(\hat{F}_1(X_{i,1}), \ldots, \hat{F}_d(X_{i,d})\big)$. Let $h_{\min}=\min(h_1, \ldots, h_d)$. Then $E(|\hat{M}_t - M_t|^2) = O\big( \frac{1}{n} \big) \Big(1+ \left( \frac{d}{h_{\min}} \right)^2\Big)$ .
\end{lemma}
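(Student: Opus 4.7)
The plan is to split $\hat M_t - M_t$ into an oracle sampling-error term computed from the true marginals and a plug-in correction caused by using $\hat F_j$ in place of $F_j$, and to handle each part separately. Set $U_{i,j} = F_j(X_{i,j})$, so that $(U_{i,1}, \ldots, U_{i,d})$ are iid on $[0,1]^d$ with density $c$, and define the oracle estimator $\tilde M_t = n^{-1}\sum_{i=1}^n B_t(U_{i,1}, \ldots, U_{i,d})$. Then $E(\tilde M_t) = M_t$, and the elementary inequality $(a+b)^2 \le 2a^2 + 2b^2$ gives
\[
  E(|\hat M_t - M_t|^2) \le 2\,E(|\hat M_t - \tilde M_t|^2) + 2\,E(|\tilde M_t - M_t|^2),
\]
so it suffices to bound the two terms on the right.

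For the oracle term, each tensor-product linear B-spline $B_t$ takes values in $[0,1]$, so $\mathrm{Var}(B_t(U_1)) \le 1$, and averaging $n$ iid summands immediately gives $E(|\tilde M_t - M_t|^2) \le 1/n$.

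The plug-in term requires a Lipschitz estimate for $B_t$. Writing $B_t(u) = \prod_{j=1}^d \phi_j(u_j)$, where each $\phi_j$ is a univariate linear hat function of slope at most $1/h_j$ in absolute value and bounded by $1$, the telescoping identity for products of numbers in $[0,1]$ gives
\[
  |B_t(u) - B_t(v)| \le \sum_{j=1}^d \frac{|u_j - v_j|}{h_j}.
\]
Applying this with $u = (\hat F_1(X_{i,1}), \ldots, \hat F_d(X_{i,d}))$ and $v = (U_{i,1}, \ldots, U_{i,d})$, bounding each $|\hat F_j(X_{i,j}) - F_j(X_{i,j})|$ by the Kolmogorov-Smirnov distance $\|\hat F_j - F_j\|_\infty$, and averaging over $i$ yields
\[
  |\hat M_t - \tilde M_t| \le \sum_{j=1}^d \frac{\|\hat F_j - F_j\|_\infty}{h_j}.
\]
Squaring, applying the Cauchy-Schwarz inequality, taking expectations, and using the standard bound $E(\|\hat F_j - F_j\|_\infty^2) = O(1/n)$ from the Dvoretzky-Kiefer-Wolfowitz inequality gives $E(|\hat M_t - \tilde M_t|^2) \le d^2/(n\,h_{\min}^2)$. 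Adding this to the oracle bound yields the claim.

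The main obstacle is obtaining the correct Lipschitz constant for $B_t$ and combining the per-coordinate errors in the right order: pairing the coordinate-wise slope $1/h_j$ with the coordinate-wise deviation $\|\hat F_j - F_j\|_\infty$ before passing to any uniform bound, and only then invoking Cauchy-Schwarz on a sum of $d$ terms, produces the sharp factor $(d/h_{\min})^2$ rather than the cruder $d^3/h_{\min}^2$ that a lazy $\ell^\infty$ estimate would give. Apart from this step, the argument is routine iid variance computation plus classical empirical-process tail bounds.
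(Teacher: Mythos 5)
Your proposal is correct and follows essentially the same route as the paper: the same split of $\hat M_t - M_t$ into an oracle sampling term (bounded by $1/n$ since $0 \le B_t \le 1$) and a plug-in term handled via the telescoping Lipschitz bound for the tensor-product B-spline, followed by Cauchy--Schwarz. The only difference is cosmetic: you bound $|\hat F_j(X_{i,j}) - F_j(X_{i,j})|$ by the Kolmogorov--Smirnov sup-norm and invoke Dvoretzky--Kiefer--Wolfowitz, while the paper computes $E\big(\hat F_\ell(X_{i,\ell}) - F_\ell(X_{i,\ell})\big)^2 = (2n+2)/(12n^2)$ directly; both give the required $O(1/n)$ per coordinate.
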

\begin{proof}
For simplicity, We first define some notations. For  $1 \leq \ell \leq d$ and $1 \leq i \leq n$, let $\zeta_{i, \ell} = \hat{F}_{\ell}(X_{i, \ell}) - F_{\ell}(X_{i, \ell})$, where $F_{\ell}$ is the CDF for $X_{i,\ell}$,
and
\begin{eqnarray*}
\varphi_1 & = & B_t \big( F_1(X_{i,1})+\zeta_{i,1}, \ldots, F_d(X_{i,d})+\zeta_{i,d}  \big) \\
&\vdots & \\
\varphi_{\ell} & = & B_t \big(F_1(X_{i,1}), \ldots, F_{\ell-1}(X_{i,\ell-1}), F_{\ell} (X_{i,\ell})+\zeta_{i,\ell}, \ldots, F_d(X_{i,d})+\zeta_{i,d}  \big)\\
&\vdots & \\
\varphi_d & = & B_t \big(F_1(X_{i,1}), \ldots, F_{d-1}(X_{i,d-1}), F_d(X_{i,d})+\zeta_{i,d} \big)\\
\varphi_{d+1} & = & B_t \big(F_1(X_{i,1}), \ldots, F_d(X_{i,d}) \big).
\end{eqnarray*}
Since $B_t(u_1, \ldots, u_d) = \phi_{t,1}(u_1)\cdots\phi_{t,d}(u_d)$, where for $1 \leq j \leq d$, $|\phi_{t,j}| \leq 1$ and 
\[ 
         | \phi_{t,j}(x_1) - \phi_{t,j}(x_2)|    \leq \frac{|x_1-x_2|}{h_{\min}}  \mbox{ for } x_1, x_2 \in [0,1],
\]
we have
\[\varphi_{\ell} - \varphi_{\ell+1} = \Big[\phi_{t, \ell}\big(F_{\ell} (X_{i, \ell})+\zeta_{i,\ell}\big) - \phi_{t, \ell}\big(F_{\ell} (X_{i, \ell})\big) \Big] \prod_{s = 1}^{\ell-1} \phi_{t, s}\big( F_s(X_{i,s})\big) \prod_{s = \ell+1}^d \phi_{t, s} \big( F_s(X_{i,s})+ \zeta_{i,s} \big),\]
and
\[|\varphi_{\ell} - \varphi_{\ell+1}| \leq |\phi_{t, \ell}\big(F_{\ell} (X_{i,\ell})+\zeta_{i,\ell} \big)-\phi_{t, \ell}\big(F_{\ell} (X_{i,\ell})\big)| \leq \frac{1}{h_{\min}}|\zeta_{i,\ell}|.\]
Therefore,
\begin{eqnarray*}
& & |B_t\big(\hat{F}_1(X_{i,1}), \ldots, \hat{F}_d(X_{i,d})\big)-B_t\big(F_1(X_{i,1}), \ldots, F_d(X_{i,d})\big)|\\
&=& |(\varphi_1 - \varphi_2) + (\varphi_2 - \varphi_3) + \cdots + (\varphi_d - \varphi_{d+1})|\\
& \leq & \frac{1}{h_{\min}} \sum_{\ell=1}^d |\zeta_{i,\ell}|.
\end{eqnarray*}
In addition,
\begin{eqnarray*}
|M_t - \hat{M}_t| &=& \Big| 
\int_{[0, 1]^d}c(u)B_t(u)du - \frac{1}{n}\sum_{i= 1}^n B_t \big(\hat{F}_1(X_{i,1}), \ldots, \hat{F}_d(X_{i,d})\big)\Big|\\
& \leq & \bigg| \underbrace{\frac{1}{n}\sum_{i= 1}^n \Big[B_t \big(\hat{F}_1(X_{i,1}), \ldots, \hat{F}_d(X_{i,d})\big) - B_t \big(F_1(X_{i,1}), \ldots, F_d(X_{i,d})\big)\Big] }_{I} \bigg|\\
&& + \Big| \underbrace{\frac{1}{n} \sum_{i = 1}^n B_t \big(F_1(X_{i,1}), \ldots, F_d(X_{i,d})\big) - E\big(B_t(F_1(X_{i,1}), \ldots, F_d(X_{i,d})\big) }_{II} \Big|.
\end{eqnarray*}
Since $|I| \leq \frac{1}{nh_{\min}} \sum_{i =1}^n \sum_{\ell = 1}^d |\zeta_{i,\ell}|$, we have
\[EI^2 \leq \Big(\frac{1}{nd}\Big)\Big(\frac{d}{h_{\min}}\Big)^2 \sum_{i =1}^n \sum_{\ell = 1}^d E\zeta_{i,\ell}^2 = \Big(\frac{1}{nd}\Big)\Big(\frac{d}{h_{\min}}\Big)^2 \sum_{j =1}^n \sum_{\ell = 1}^d \frac{2n+2}{12n^2}=\Big(\frac{d}{h_{\min}}\Big)^2 O\Big(\frac{1}{n}\Big).\]
In addition,
\[EII^2 \leq \frac{1}{n} = O\Big(\frac{1}{n}\Big),\]
so
\[E(| M_t - \hat{M}_t |^2) = O\Big(\frac{1}{n}\Big) \Big(1+\Big(\frac{d}{h_{\min}}\Big)^2\Big).\]
\end{proof} 
Now, we will prove Theorem 2.
\begin{proof}
Suppose that there exists a $\alpha^*$ such that $\|(\alpha^*)^TB-c\| \leq \Delta_1$, where $(\alpha^*)^T B$ satisfies the constraints for a copula density.
Then,
\begin{eqnarray}
\|\hat{\alpha}^TB-c\|^2 & \leq & 2\|( \hat{\alpha}-\alpha^*)^T B\|^2 + 2\|(\alpha^*)^TB-c\|^2 \nonumber \\
& \leq & 2( \hat{\alpha}-\alpha^*)^T P ( \hat{\alpha}-\alpha^*) +2 \Delta_1^2 \nonumber \\
& \leq & 2( \hat{\alpha}-\alpha^*)^T P^T P^{-1}P ( \hat{\alpha}-\alpha^*) +2 \Delta_1^2 \nonumber\\
& \leq & \frac{2}{\min(eigen(P))}( \hat{\alpha}-\alpha^*)^T P^T P ( \hat{\alpha}-\alpha^*) +2 \Delta_1^2 \nonumber\\
& \leq & \Big(\frac{2}{\beta\min(eigen(P))}\Big) \Big(\beta ( \hat{\alpha}-\alpha^*)^T P^T P ( \hat{\alpha}-\alpha^*)\Big)+2 \Delta_1^2. \label{eq:11}
\end{eqnarray}
Here the $eigen(A)$ denotes the eigenvalues of a matrix A.
Let 
\begin{eqnarray*}
 I_0 & = & \beta (P \hat{\alpha}-\hat{M})^T(P \hat{\alpha} -\hat{M})+\lambda Pen(c, \hat{\alpha})\\ 
&  &- [\beta (P \alpha^* -\hat{M})^T(P \alpha^* -\hat{M})+\lambda Pen(c, \alpha^*)] \\
 & & - [\beta (P \hat{\alpha} -M)^T(P \hat{\alpha} -M)+\lambda Pen(c, \hat{\alpha}) ]\\
 &  & + \beta (P \alpha^* -M)^T(P \alpha^* -M)+\lambda Pen(c, \alpha^*). 
\end{eqnarray*}
Since 
\begin{eqnarray*}
&&\beta (P \hat{\alpha}-\hat{M})^T(P \hat{\alpha}-\hat{M})+\lambda Pen(c, \hat{\alpha})\\ 
& \leq & \beta (P \alpha^*-\hat{M})^T(P \alpha^*-\hat{M})+\lambda Pen(c, \alpha^*),
\end{eqnarray*}
we have
\begin{eqnarray*}
&& I_0 + \beta (P \hat{\alpha} -M)^T(P \hat{\alpha} -M)+\lambda Pen(c, \hat{\alpha})\\ 
& \leq & \beta (P \alpha^* -M)^T(P \alpha^*-M)+\lambda Pen(c, \alpha^*) .
\end{eqnarray*}
Therefore,
\begin{eqnarray}
&& \beta (\hat{\alpha} - \alpha^*)^TP^TP(\hat{\alpha} - \alpha^*) \nonumber\\
& & \leq 2 \beta (P\hat{\alpha} - M)^T(P\hat{\alpha} - M) + 2 \beta (P\alpha^* - M)^T(P\alpha^* - M) \nonumber \\
&& \leq 4 \beta (P\alpha^* - M)^T(P\alpha^* - M) + 2 \lambda Pen(c, \alpha^*) -2 I_0.  \label{eq:1}
\end{eqnarray}
Note that \[I_0 = 2 \beta (\hat{M}-M)^T P \alpha^* - 2 \beta (\hat{M}-M)^T P \hat{\alpha},\]
so
\begin{eqnarray}
-2I_0 & \leq & 4|\beta(\hat{M}-M)^T P (\hat{\alpha}-\alpha^*)| \nonumber \\
& \leq & 4  \sqrt{\beta (\hat{M}-M)^T(\hat{M}-M)}\sqrt{\beta \big(P(\hat{\alpha}-\alpha^*)\big)^T \big(P(\hat{\alpha}-\alpha^*)\big)} \label{eq:2}
\end{eqnarray}
Let
\begin{eqnarray*} 
 \varepsilon_1 &=& 4 \beta (P\alpha^* - M)^T(P\alpha^* - M) + 2 \lambda Pen(c, \alpha^*), \\
 \varepsilon_2 &=&  \sqrt{\beta (\hat{M}-M)^T(\hat{M}-M)}, 
\end{eqnarray*} 
and
\[
 U= \sqrt{\beta \big(P(\hat{\alpha}-\alpha^*)\big)^T \big(P(\hat{\alpha}-\alpha^*)\big)},
\]
then it follows from (\ref{eq:1}) and (\ref{eq:2}) that 
\[
 U^2 \leq \varepsilon_1 + 4 \varepsilon_2 U,
\]
so 
\begin{equation} \label{eq:3}
 |U| \leq 2\varepsilon_1 + \sqrt{\varepsilon_1+4\varepsilon_2^2}.
\end{equation}
 
To control $\varepsilon_1$,  let $c_i = \int_{[0, 1]^d} B_i(u)du$ for $1 \leq i \leq k$, then
\begin{eqnarray*}
(P \alpha^*-M)^T (P \alpha^*-M) & \leq & \max_{1 \leq i \leq k} c_i \int_{[0, 1]^d} \sum_{1 \leq i \leq k} B_i(u)\big((\alpha^*)^T B(u) - c(u)\big)^2 du\\
& = & \Delta_1^2 O(1)/ \beta,
\end{eqnarray*}
which, together with the fact that $Pen(c, \alpha^*) = (d(d-1)/2)\Delta_1^2$,
implies that $\varepsilon_1 = \Delta_1^2 O(1) \rightarrow 0$ as $n \rightarrow \infty$.

To control $\varepsilon_2$,  note that from Lemma 2, we have 
\begin{eqnarray*}
E  \varepsilon_2^2 & \leq &  O\Big(\frac{k}{n}\Big)\Big(1+\Big(\frac{d}{h_{\min}}\Big)^2\Big)\Big(\frac{1}{h_{\min}}\Big)^d\\
& = & O\Big(\frac{k}{n}\Big)\Big(\frac{1}{h_{\min}}\Big)^{2+d},
\end{eqnarray*}
so $\varepsilon_2$ converges to 0 in probability. 

From the above discussion for $\varepsilon_1$ and $\varepsilon_2$,  it follows from (\ref{eq:3})
that $U$ converges to 0 in probability. From (\ref{eq:11}), 
\[
\|\hat{\alpha}^TB-c\|^2 \leq \Big(\frac{2}{\beta\min(eigen(P))}\Big) U^2 + 2 \Delta_1^2 = O(1) U^2 + 2 \Delta_1^2,
\]
so $\|\hat{c}-c\| = \|\hat{\alpha}^TB-c\|$ converges to 0 in probability.
\end{proof}

\end{document}